\documentclass[12pt, draftclsnofoot, onecolumn]{IEEEtran}

\usepackage[utf8]{inputenc}
\IEEEoverridecommandlockouts
\usepackage{cite}
\usepackage{xcolor}
\usepackage{amsmath,amssymb,amsfonts}
\usepackage{algorithm}
\usepackage{algorithmic}
\usepackage{etoolbox}  
\usepackage{bbm}
\usepackage{graphicx}
\usepackage{amsmath}
\usepackage{amssymb}
\usepackage{amsthm}
\usepackage{bm}
\usepackage{gensymb}
\usepackage{subfig}
\usepackage{booktabs}
\usepackage{tikz}
\usepackage{pgfplots}
\usepackage{latexsym}
\pgfplotsset{compat=1.5}
\usetikzlibrary{automata,arrows,positioning,calc,shapes, intersections}
\usepgfplotslibrary{colorbrewer}
\DeclareMathOperator*{\argmin}{arg\,min\,}
\DeclareMathOperator*{\argmax}{arg\,max\,}

\newcommand{\set}[1]{\mathcal{#1}}
\newtheorem{definition}{Definition}

\newtheorem{theorem}{Theorem}

\makeatletter
\patchcmd{\algorithmic}{\addtolength{\ALC@tlm}{\leftmargin} }{\addtolength{\ALC@tlm}{\leftmargin}}{}{}
\makeatother

\makeatletter
\newcommand\fs@betterruled{%
	\def\@fs@cfont{\bfseries}\let\@fs@capt\floatc@ruled
	\def\@fs@pre{\vspace*{5pt}\hrule height.8pt depth0pt \kern2pt}%
	\def\@fs@post{\kern2pt\hrule\relax}%
	\def\@fs@mid{\kern2pt\hrule\kern2pt}%
	\let\@fs@iftopcapt\iftrue}
\floatstyle{betterruled}
\restylefloat{algorithm}
\makeatother

\definecolor{green}{RGB}{27,158,119}
\definecolor{BuGn}{RGB}{28,144,153}
\newcommand{\ilm}{\textcolor{black}}

\usepackage{filecontents}
\usepackage{graphicx}
\usepackage{textcomp}
\usepackage{xcolor}
\def\BibTeX{{\rm B\kern-.05em{\sc i\kern-.025em b}\kern-.08em
		T\kern-.1667em\lower.7ex\hbox{E}\kern-.125emX}}

\hyphenation{trans-ceiv-er}
\begin{document}
\title{Inter-Plane Inter-Satellite
Connectivity in Dense LEO Constellations}
\author{Israel~Leyva-Mayorga, \IEEEmembership{Member,~IEEE}, Beatriz~Soret, \IEEEmembership{Member,~IEEE}, and Petar~Popovski, \IEEEmembership{Fellow,~IEEE}
\thanks{The authors are with the Department of Electronic
Systems, Aalborg University, 9220 Aalborg, Denmark (e-mail:\{ilm, bsa, petarp\}@es.aau.dk)}}
\maketitle
\begin{abstract}
With numerous ongoing deployments owned by private companies and startups, dense satellite constellations deployed in low Earth orbit (LEO) will play a major role in the near future of wireless communications. In addition, the 3rd Generation Partnership Project (3GPP) has ongoing efforts to integrate satellites into 5G and beyond-5G networks. Nevertheless, numerous challenges must be overcome to fully exploit the connectivity capabilities of satellite constellations. These challenges are mainly a consequence of the low capabilities of individual small satellites, along with their high orbital speeds and small coverage due to the low altitude of deployment. In particular, inter-plane inter-satellite links (ISLs), which connect satellites from different orbital planes, are greatly dynamic and may be considerably affected by the Doppler shift. In this paper, we present a framework and the corresponding algorithms for the dynamic establishment of the inter-plane ISLs in LEO constellations. Our results show that the proposed algorithms increase the sum of rates in the constellation 1) by up to $115$\% with respect to the state-of-the-art benchmark schemes in an interference-free environment and 2) by up to $71$\% when compared to random resource allocation in a worst-case scenario for interference.
\end{abstract}


\IEEEpeerreviewmaketitle

\section{Introduction}
\label{sec:intro}

There is an unprecedented interest from the industry and international agencies on dense satellite constellations deployed in low Earth orbit (LEO). Due to their relatively low altitude of deployment, between $500$ and $2000$~km over the Earth's surface~\cite{3GPPTR38.821, TR22.822, Qu2017}, LEO constellations are able to provide global coverage and 
reduced propagation delays in the ground-to-satellite links (GSLs) when compared to higher orbits. This combination of characteristics makes them an appealing option to support two of the three main use cases for 5G: massive machine-type communications (mMTC) and enhanced mobile broadband (eMBB). In addition, LEO constellations can be used for ultra-reliable communications (URC) with high
availability (up to $99.99$\% for LEO constellations~\cite{3GPPTR38.821}) and reliability in combination with 
slightly relaxed latency requirements, in the order of a few tens of milliseconds~\cite{3GPPTR38.821,TR38.913,Giambene2018}. Hence, LEO constellations are envisioned to be integrated into 5G and beyond-5G wireless networks with the purpose to dramatically extend cellular coverage, serve as a global backbone, and offload the cellular base stations in problematic hot spots~\cite{TR22.822, TR38.913, 3GPPTR38.811, 3GPPTR38.821}. 

The position of LEO satellites with respect to the ground is not fixed. Instead, the orbital velocities of LEO satellites \ilm{are up to $7.6\,\mathrm{km}/\mathrm{s}$}; much greater than those of satellites in higher orbits. Therefore, LEO constellations are typically organized in groups of satellites that orbit the Earth following the same trajectory called \emph{orbital planes}. Moreover, the low altitude of deployment imposes an important limitation on the ground coverage of an individual LEO satellite\ilm{\cite{TR22.822}}. 

Due to the characteristics described above, dense LEO deployments are needed to provide global and continuous coverage\ilm{\cite{TR22.822}}. In combination with the necessity of reducing the overall cost of deployment of the constellation, this fosters the use of physically small satellites with low individual computing and connectivity capabilities. In addition, the high orbital velocities create frequent changes in the satellite network topology and complicate the communication between satellites in different orbital planes. 

 Communication between satellites takes place through inter-satellite links (ISLs); these are illustrated along with the satellite axes in Fig.~\ref{fig:constellation} for a typical Walker star constellation~\cite{Walker1984}. 
 Communication between satellites in the same orbital plane occurs through the intra-plane ISLs, using the antennas located at both sides of the roll axis. Intra-plane ISLs are rather stable due to the nearly constant distance between neighboring satellites in a same orbital plane, called \emph{intra-plane distance}. On the other hand, communication between satellites in different orbital planes occurs through the inter-plane ISLs, using the antennas located at both sides of the pitch axis. Contrary to intra-plane ISLs, inter-plane ISLs are highly dynamic due to the different velocity vectors of the satellites\ilm{~\cite{Su2019}}. 
\begin{figure*}[t]
    \centering
    \subfloat[]{\includegraphics{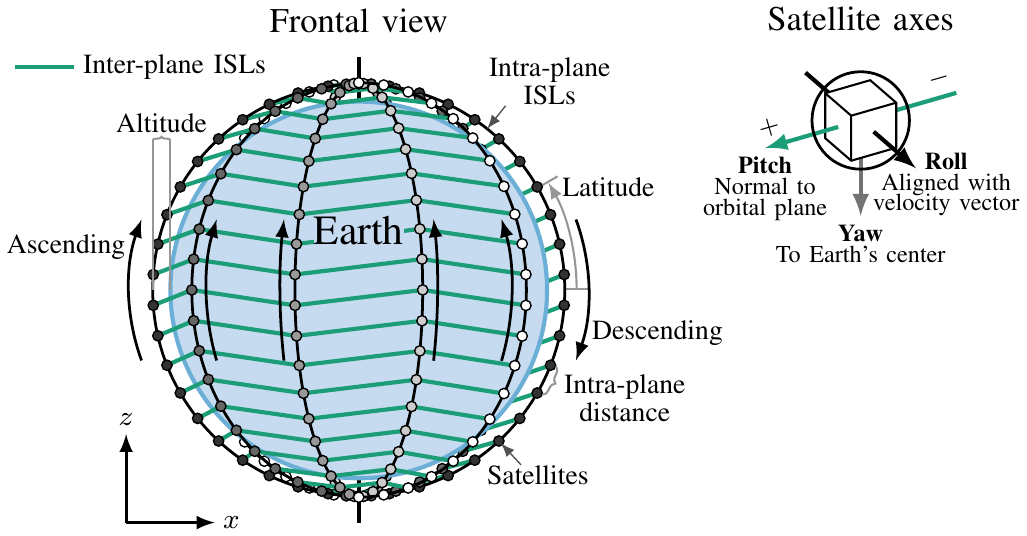}}
    \subfloat[]{\includegraphics{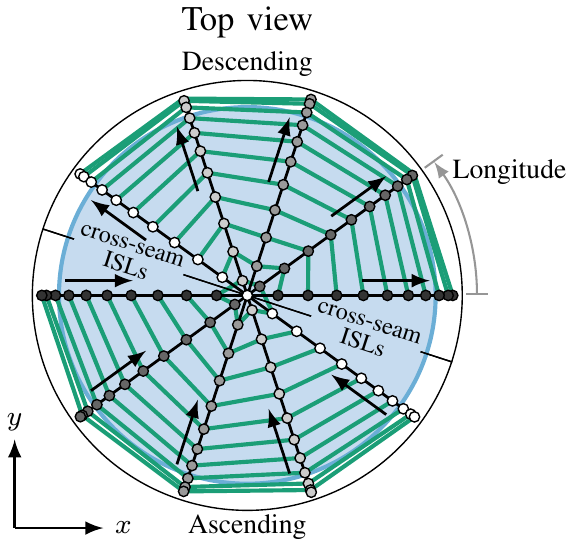}}
    \caption{(a) Frontal view and satellite axes and (b) top view of a Walker star constellation with $200$ satellites and $5$ orbital planes, deployed between $600$ and $640$~km above the Earth's surface.
    } 
    \label{fig:constellation} 
\end{figure*}

 In Fig.~\ref{fig:constellation}, the white and black satellites are orbiting the Earth in opposite directions, which results in large relative velocities between them. The ISLs between these orbital planes are known as \emph{cross-seam ISLs}, where the Doppler effect is considerably large. As a consequence, it is common to find constellations where the cross-seam ISLs are not implemented, as illustrated in Fig.~\ref{fig:constellation}~\cite{Liu2017, KeplerIEEESpectrum}. 

 \ilm{During the past decades, the connectivity aspects of satellite constellations have been widely investigated~\cite{Di2019, Kak2019, Ekici2001, Liu2017, Jiang2020,Hu2020}. However, only a few studies focus on inter-plane inter-satellite communication, even though it is essential to fully unleash the potential of LEO satellite constellations and facilitate their successful integration with 5G.} 
 
 Furthermore, most of the theoretical research on inter-satellite communication and routing considers a perfectly symmetric constellation~\cite{Kak2019, Ekici2001, Liu2017}, where orbital planes are deployed at the same altitude and at evenly-spaced longitudes (see Fig.~\ref{fig:constellation}). Naturally, a perfect symmetry greatly simplifies the ISL communication. For example, Kak et al.~\cite{Kak2019} focused on the design of a fully symmetrical constellation to maximize the coverage and throughput while minimizing the cost of deployment. On the other hand, Ekici \emph{et al.} proposed a routing algorithm for a fully symmetric Walker star constellation that exploits the horizontal alignment of the satellites to form \emph{rings}~\cite{Ekici2001}. 
 
 Nevertheless, the orbital planes of commercial LEO constellations are commonly deployed at slightly different altitudes. These differences, known as orbital separation, greatly minimize the risk of collisions between the satellites at the crossing points of the orbital planes~\cite{Lewis2019, OneWeb_brochure} (located at the poles in Fig.~\ref{fig:constellation}). If these orbital separations are not introduced, active station keeping may be necessary, which has a great cost in terms of propellant usage~\cite{Singh2020}. On the downside, orbital separations lead to slightly different orbital periods at the orbital planes, which lead to a dynamic network topology and make it impossible to use fixed tables to establish the inter-plane ISLs. Therefore, the inter-plane ISLs have to be established on-the-fly.
 
 In this paper, we formulate the establishment of unicast inter-plane ISLs in dense LEO constellations as a weighted dynamic matching problem and propose a framework to maximize the sum of rates selected for communication. Our framework, illustrated in Fig.~\ref{fig:examples}, encompasses two phases. In the first phase, the satellite pairs are selected from a set of feasible pairs $\set{E}$--the input--to create the satellite matching $\set{M}\subset \set{E}$. The goal is to maximize the sum of rates in $\set{M}$ assuming an interference-free environment. Next, in the second phase, \ilm{if the ISLs are affected by interference,} orthogonal wireless resources in the set $\set{K}$ are allocated to the selected satellite pairs $\set{M}$ to achieve an efficient resource sharing. Hence, set $\set{A}$ contains the resources allocated for communication between satellite pairs. To establish the inter-plane ISLs, $\set{A}$ must be informed to all satellites in the constellation and these must complete the necessary handshakes. The design and analysis of the routing and link establishment protocols are out of the scope of the paper.
 
 \begin{figure}
    \centering
    \includegraphics{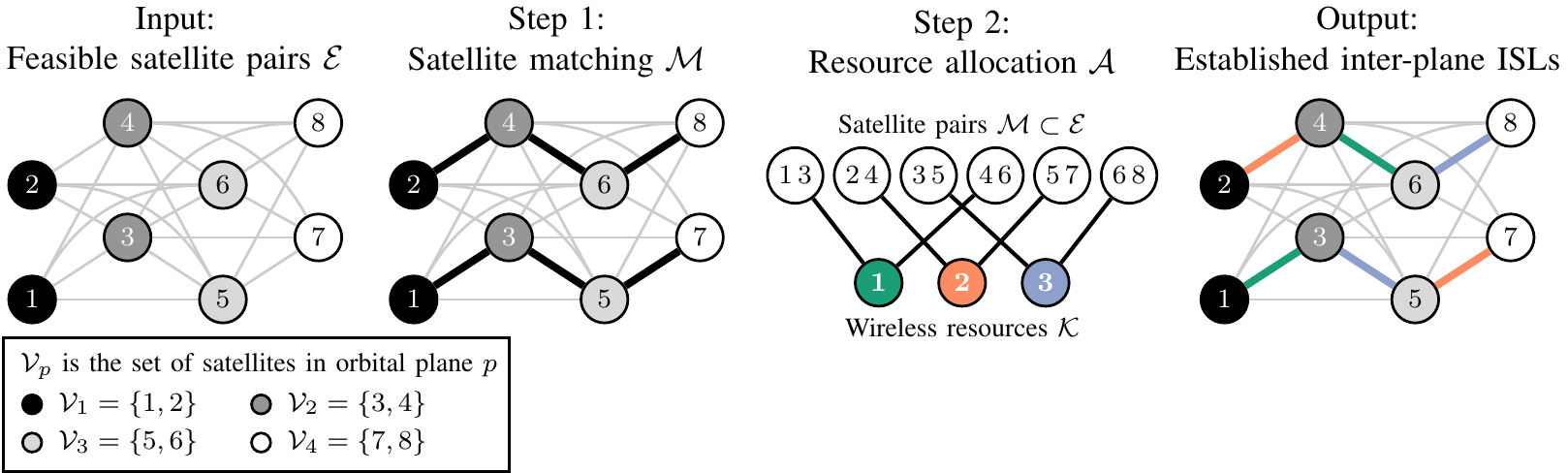}
    \caption{Exemplary diagrams for inter-plane ISL establishment in a satellite constellation where satellites possess two inter-plane transceivers. First, the feasible satellite pairs $\set{E}$ are identified. In this example, \ilm{communication is not feasible between satellites in orbital plane $1$ (black) and those in orbital plane $4$ (white)}. Second, the satellite pairs are selected. \ilm{Finally, interference can be mitigated by allocating orthogonal wireless resources}.} 
    \label{fig:examples}
\end{figure}
 
 \ilm{Note that this and our \ilm{preliminary work~\cite{Soret2019}} are one of the few that focus on the establishment of inter-plane ISLs. Instead, most of the literature focuses on constellation design~\cite{Kak2019}, routing~\cite{Ekici2001}, or on calculating the throughput~\cite{Portillo2019} and \emph{transport capacity} of the constellation \cite{Liu2017, Jiang2020} once the ISLs have been established.} For instance, Jiang and Zhu~\cite{Jiang2020} considered a set of pre-established ISLs and their rates to calculate and allocate the \emph{transport capacity} between source and destination pairs in a scenario with satellites in LEO and higher orbits. Similarly, Liu et al.~\cite{Liu2017} calculated bounds for the transport capacity from the sum of rates in specific cuts in the constellation graph.  
\ilm{Further, Hu et al.~\cite{Hu2020} formulated the problem of rate maximization in an integrated terrestrial and non-terrestrial setup as a competitive market. The latter includes base stations, drones, and one satellite--the service providers--who submit prices for the resources to the user equipments--the consumers~\cite{Hu2020}.}
 
\ilm{Rate maximization in the inter-plane ISLs is an essential step towards maximizing the \emph{transport capacity} in a LEO constellation. Our framework is able to solve the rate maximization problem in any satellite constellation regardless of its geometry and/or symmetry. Therefore, it is also applicable during the initial deployment phases of satellite constellations, where only a few satellites and orbital planes have been deployed and great asymmetries are present. Further, it does not require having different roles in the nodes as in the problem formulated by Hu et al.~\cite{Hu2020}. Finally, our framework can be used in combination with the algorithms proposed by Liu et al.~\cite{Liu2017} and by Jiang and Zhu.~\cite{Jiang2020} to calculate and distribute the transport capacity. In particular, the main contributions of this paper include:}
\begin{enumerate}
\item A detailed analysis of the inter-plane connectivity in satellite constellations. We consider the hardware limitations of small satellites and the impact on the system performance of using different multiple access methods, namely orthogonal frequency division multiple access (OFDMA) and code division multiple access (CDMA). The inter-plane ISLs are characterized in terms of communication range, achievable data rates, propagation delays, and the potential impact of interference. 
\item A framework to find a near-optimal solution to the ISL establishment problem. That is, finding the optimal solution to the ISL establishment is intractable. Instead, we decompose the problem into two tractable sub-tasks: 1) maximum weighted (i.e., data rate) satellite matching and 2) resource allocation. 
    \item Efficient implementations for two different greedy algorithms for maximum weighted satellite matching. 
    \item A specific implementation for a resource allocation algorithm that achieves near-optimal performance by making the best global decision at each iteration. \ilm{Such algorithm is essential for rate maximization whenever the ISLs are affected by interference due to the use of wide-beam antennas.} 
\end{enumerate}

 
  Our results show that: 1) the greedy satellite matching algorithms increases the sum of rates in the constellation by up to $115$\% and by less than $46$\% when compared to our benchmark algorithm, adapted from the work by Ekici \emph{et al.}~\cite{Ekici2001}; 2) the proposed resource allocation algorithm increases the achievable sum of rates by $42$\% with FDMA and by $71$\% with CDMA when compared to random allocation; 3) the maximum achieved sum of rates with FDMA is $84$\% greater than that with CDMA; and 4) given that our matching algorithms are combined with an appropriate constellation design, the propagation delay at $80$\% of the inter-plane ISLs is less than $10$~ms. This goes in line with the technical recommendations from the 3GPP~\cite{3GPPTR38.821}, which considers one-hop propagation delays of $10$~ms as typical in LEO constellations and helps meet the requirements for a one-way link, defined to be $30$~ms~\cite{TR22.822}.
 

The rest of the paper is organized as follows. Section~\ref{sec:systemmodel} presents the system model for the considered LEO constellation and the problem formulation. Then, Section~\ref{sec:matching} presents our proposed framework and a detailed description and analysis of the proposed matching and resource allocation algorithms. Section~\ref{sec:isl_conn} presents an analysis on inter-plane connectivity that serves as a base to select the simulation parameters. Section~\ref{sec:results} presents the results and Section~\ref{sec:conclusions} concludes the paper.



\section{System model and problem formulation} 
\label{sec:systemmodel}
This section presents the system model and formulates the optimization problem. The relevant notation introduced in this section is listed in Table~\ref{tab:notation}.

\subsection{System model}
We consider a general Walker star LEO constellation, such as the one illustrated in Fig.~\ref{fig:constellation}, where $N$ satellites are evenly distributed in $P$ circular and evenly-spaced polar orbital planes. 
The satellites communicate through unicast inter-plane ISLs according to an arbitrary predefined multiple access method. Specifically, in this paper we consider OFDMA and CDMA, but other methods can be implemented. 

\begin{table}[t]
    \centering
     \caption{Notation summary.}
    \begin{tabular}{@{}ll@{}cll@{}}
    \toprule
       Symbol & Description & & Symbol & Description\\\cmidrule{1-2}\cmidrule{4-5}
       $N$ & Total number of satellites & & $P$ & Number of orbital planes\\
        $B$ & System bandwidth &&
        $h_p$ & Altitude of orbital plane $p$\\
        $\epsilon_p$ & Longitude of orbital plane $p$ & & $\theta_u$ & Polar angle of satellite $u$ \\
        $N_p$ & Number of satellites in orbital plane $p$
         & & $p(u)$ & Orbital plane of satellite $u$\\ $\mathrm{R_E}$ & Radius of the Earth  & &
          $Q$ & Number of inter-plane transceivers per satellite\\
        $\set{G}$ & Multi-partite graph of the constellation&&
        $\set{V}$ & Vertex set of $\set{G}$ \\
        $\set{E}$ & Edge set of $\set{G}$ &&
        $\set{K}$ & Set of orthogonal resources \\
        $d\in\{-,+\}$ & Direction in the pitch axis && $d(u,v)$ & Relative direction of satellite $v$ w.r.t. $u$ \\
         $l^\star(uv)$ & Maximum slant range between $u$ and $v$ & &$L(uv)$ & Path loss for an ISL between $u$ and $v$ \\ 
          $K$ & Number of orthogonal wireless resources  &&
          $k$ & Index of an orthogonal wireless resource \\
          $a_{(u,v,k)}$ & Indicator of $u$ transmitting to $v$ with $k$ && $P_\text{out}$ & Outage probability of an ISL\\
         $I(u,v,k)$ & Interference at $v$ for a transmitter $u$ in $k$ && $\set{I}_{(u,v,k)}$ & Interference pattern at $v$ for a transmitter $u$ in $k$ \\
         EIRPG & EIRP plus receiver antenna gain & & $w({e})$ & Weight of edge $e$\\
         $R^\star_\text{SNR}(u,v)\!$ & Maximum rate to transmit from $u$ to $v$ & & $R^\star_\text{SINR}(u,v,k)\!$ & Maximum rate to transmit from $u$ to $v$\\[-0.5em]
         &given $P_\text{out}=0$ and $I(u,v,k)=0$&& &given $P_\text{out}=0$ and $\max_{\set{I}_{(u,v,k)}} I(u,v,k)$\\
         $R_\text{min}$ & Minimum rate to establish an ISL & & $\text{MPL}$ & Maximum FSPL to transmit at $R_\text{min}$\\
         \bottomrule
    \end{tabular}
    \label{tab:notation}
\end{table}
Each orbital plane $p\in\{1,2,\dotsc,P\}$ is deployed at a given altitude above the Earth's surface $h_p$~km, at a given longitude $\epsilon_p$~radians, and consists of $N_p$ evenly-spaced satellites. The polar angle of a satellite $u$ (i.e., its angle w.r.t. the Earth's north pole, the $z$-axis) is denoted as $\theta_u$. Besides, we define the function $p(u)$ to be the orbital plane in which satellite $u$ is deployed. Therefore, the position of satellite $u$ in spherical coordinates is denoted as $(h_{p(u)}+\mathrm{R}_\mathrm{E}, \epsilon_{p(u)}, \theta_u)$, where $\mathrm{R}_\mathrm{E}$ is the Earth's radius.

Each satellite is equipped with a total of four antennas for unicast inter-satellite communication. The intra-plane antennas are located at both sides of the roll axis, whereas the inter-plane antennas are positioned at each side of the pitch axis, as shown in Fig.~\ref{fig:constellation}. We denote the direction of the inter-plane antennas as $d\in\{-,+\}$, where $d=-$ and $d=+$ correspond to the antennas placed at the negative (left) and positive (right) sides of the pitch axes. We consider two cases, where either one or two transceivers, namely $Q\in\{1,2\}$, are available for the inter-plane communication at each satellite. If $Q=2$, every satellite in the constellation has one transceiver per inter-plane antenna, hence, they can establish up to one intra-plane ISL at each side of the pitch axis. On the other hand, if $Q=1$, the satellites can only establish up to one inter-plane ISL in one direction.

We model the constellation at any given time instant $t$ as a weighted undirected graph \mbox{$\set{G}=(\set{V},\set{E})$} where $\set{V}$ is the set of vertices (satellites) and $\set{E}$ is the set of undirected edges (feasible inter-plane ISLs). Throughout the paper, we denote an undirected edge as $uv$ and a source-destination pair as $(u,v)$, where $u,v\in\set{V}$. Graph $\set{G}$ is multi-partite with $P$ vertex classes $\set{V}_1,\set{V}_2,\dotsc,\set{V}_P$. 

Even though $\set{G}$ is a dynamic graph due to the movement of the satellites, we observe the system at specific time instants with a sufficiently short period $T$, and omit the time index $t$ throughout the paper for notation simplicity. A similar approach was employed by Jiang and Zhu~\cite{Jiang2020} and Liu et al.~\cite{Liu2017}. Hence, we denote the weight of an edge $uv\in \set{E}$ as $w(uv)\in\mathbb{R}$. The weight $w(uv)$ is defined as the achievable data rate for the inter-plane ISL between satellites $u$ and $v$ s.t. $u,v\in\set{V}$. In the following, we present the model for inter-plane inter-satellite communication that will serve to calculate these weights. 

Inter-satellite communication occurs in a free-space environment. Therefore, it is mainly affected by the free-space path loss (FSPL) and the (thermal) noise power, which is assumed to be additive white Gaussian (AWGN)~\cite{3GPPTR38.821}. To characterize the inter-plane ISLs, we define the function $d\left(u,v\right)\in\{-,0,+\}$ as the relative direction of satellite $v$ w.r.t. satellite $u$. For the particular case of Walker star constellations, the latter can be obtained by rotating the axes by $-\epsilon_{p(u)}$ along the Earth's rotation axis (i.e., $z$), so that the orbital plane $p(u)$ is positioned along in the $xz$-plane. By doing so, the relative direction can be calculated as the function
\begin{IEEEeqnarray}{rCl}
    f_d(u,v)&=&\sin{\theta_u}\,\sin(\epsilon_{p(u)}-\epsilon_{p(v)}),
\end{IEEEeqnarray}
and we denote the relative direction as
\begin{equation}
    d(u,v) = \begin{cases}
    -, & \text{if } f_d(v,u)> 0\\
    +, & \text{if } f_d(v,u)< 0\\
    0, &\text{otherwise}
    \end{cases}
\end{equation}

In the following, we focus on defining the set of feasible ISLs $\set{E}$ in the graph. That is, the set of satellite pairs where communication is feasible. Naturally, for the intra-plane ISLs we have $d(u,v)=0$ and $p(u)=p(v)$, hence, $uv\notin \set{E}:p(u)=p(v)$. The rest of the feasible satellite pairs are determined by the existence of line-of-sight (LoS) and the magnitude of the Doppler shift.

Let $\lVert uv\rVert$ be the Euclidean distance between two satellites $u$ and $v$. The latter is given in spherical coordinates as
\begin{align}
\lVert uv\rVert =& \Big( (h_p+\mathrm{R_E})^2 + (h_q+\mathrm{R_E})^2 \nonumber\\
&  -2(h_p+\mathrm{R_E}) (h_q+\mathrm{R_E})\left(\cos\theta_{u}\cos\theta_{v} 
+\cos(\epsilon_p-\epsilon_q)\sin\theta_{u}\sin\theta_{v}\right) \Big) ^{1/2},
 \label{eq:distance}
\end{align}
where $p=p(u)$, $q=p(v)$, and $\mathrm{R_E}$ is the radius of the Earth.

Next, we calculate $l^\star(p,q)$, defined as the maximum slant range (i.e., line-of-sight distance) between two satellites $u$ and $v$, in orbital planes $p=p(u)$ and $q=p(v)$, due to the presence of the Earth. That is, the Earth blocks the LoS between $u$ and $v$ if $\lVert uv \rVert>l^\star(p,q)$. Assuming the Earth is perfectly spherical, the latter is given as
\begin{IEEEeqnarray}{rCl}
    l^\star(p,q)&=& \sqrt{h_p(h_p+2\mathrm{R_E})}
    +\sqrt{h_q(h_q+2\mathrm{R_E})}.
    \label{eq:max_l_LoS}
\end{IEEEeqnarray}
Hence, the set of edges with no line of sight (NLoS) is 
$\left\{uv:\lVert uv \rVert>l^\star(p,q)\right\}$. Building on this, the FSPL between $u$ and $v$ is calculated as
\begin{equation}
L(uv)=\begin{cases}
\left(\frac{4\pi \lVert uv\rVert f}{\mathrm{c}}\right)^2, &  \text{if } \lVert uv \rVert \leq l^\star(p,q)\\
\infty & \text{otherwise}.
\end{cases}
\label{eq:pathloss}
\end{equation}
where $f$ is the carrier frequency and $\mathrm{c}=2.998\cdot10^8$~m/s is the speed of light.

Throughout this paper we assume that the inter-plane transceivers are designed to compensate for the Doppler shift in every inter-plane ISL except in the cross-seam ISLs. This is a common practice that has been adopted in commercial deployments such as in the upcoming Kepler constellation~\cite{KeplerIEEESpectrum}, which has a Walker star geometry, and in the literature~\cite{Liu2017}.

In Walker star constellation such as the one illustrated in Fig.~\ref{fig:constellation}, the cross-seam ISLs occur between satellites in orbital planes $1$ (i.e, first) and $P$ (i.e., last), where the relative velocities between the two satellites are close to twice their orbital speeds. As a reference, we calculated a maximum Doppler shift of $114.32$~kHz in these cross-seam ISLs with $f=2.4$~GHz and $P=5$. In comparison, the maximum Doppler shift for ISLs between $p(u)=1$ and $p(v)=2$ is only $36.99$~kHz. Hence, the set of non-feasible edges due to a high Doppler shift is $\left\{uv: \left|p(u)-p(v)\right|= P-1\right\}$.

 Building on the FSPL and the Doppler shift constraints, the set of feasible edges is
\begin{equation}
    \set{E}=\left\{uv\in\set{V}^{(2)}:|p(u)-p(v)|\notin \{0,P-1\}, L(uv)<\infty\right\}.
    \label{eq:edges}
\end{equation}

Having defined the set of feasible edges, we move on to characterize the ISLs. For this, let $G^{d(u,v)}_u$ denote the normalized gain of the inter-plane antenna of satellite $u$ in the direction of $v$ w.r.t. the pitch axis. The antenna gain is a function of the beamwidth and the relative direction of the main lobe of antenna $d(u,v)$ in $u$ w.r.t. the position of $v$. Throughout this paper, we assume that the transmission power $P_t$ is fixed for all satellites and that these are equipped with similar (directional) antennas with perfect beam steering capabilities. Therefore, we simply define the peak gain of both antennas of satellite $u$ in the direction of their main lobe as 
\begin{equation}
    G_\text{max}=\max_{f_d(u,v)} G^{d(u,v)}_u.
\end{equation}
 Furthermore, a satellite pair $uv\in \set{E}$ can only communicate with the antennas in the corresponding directions. Therefore, the transmitter and receiver antennas are always aligned in the direction of maximum radiation so the gain of each of these is $G_\text{max}$. For simplicity, we define the 
 effective isotropic radiated power plus receiver antenna gain (EIRPG) as
\begin{equation}
    \text{EIRPG}=P_tG_\text{max}^2.
\end{equation}  
Naturally, for isotropic antennas we have $G^d_{(u,v)}=1$ for all $d$, and $u,v\in\set{V}$. 

By assuming the wireless channels are symmetric, we define the signal-to-noise ratio (SNR) for an ongoing transmission from $u$ to $v$ and vice-versa as
\begin{equation}
    \text{SNR}(uv)=
    \frac{\text{EIRPG}}{k_\mathrm{B} \tau B  L(uv)},
    \label{eq:snr}
\end{equation}
where $k_\mathrm{B}$ is the Boltzmann constant, $\tau$ is the thermal noise in Kelvin, and $B$ is the channel bandwidth in Hertz.

Next, let $R(u,v,k)\in \mathbb{R}$ be the data rate used for communication from $u$ to $v$ with resource $k$. The latter is selected from an infinite set of possible rates to have zero outage probability $P_\text{out}=0$. Note that, ensuring $P_\text{out}=0$ is of utmost importance in satellite communications to avoid the use of feedback with high round-trip-times (RTTs) due to the long propagation delays. Note that interference can be avoided if the inter-plane ISL antennas in all the satellites combine sufficiently narrow beams with precise beam steering or antenna pointing capabilities. Therefore, the maximum data rate that $u$ can select to communicate with $v$ to ensure $P_\text{out}=0$ in an interference-free environment is
\begin{equation}
    R^\star_\text{SNR}(uv)=\max R\Big(u,v,k\mid I\left(u,v,k\right)=0, P_\text{out}=0\Big)=B \log_2\Big(1+\text{SNR}(uv)\Big).
    \label{eq:snr_rate}
\end{equation}

However, in a general scenario, interference between ISLs is expected to occur due to the sharing of wireless resources by multiple inter-plane ISLs. In the following, we calculate the impact of interference and resource sharing in the achievable data rates at the ISLs.
Recall that, in our setting, a wireless resource $k$ is selected from a pool of $K$ orthogonal wireless resources $\set{K}=\{1,2,\dotsc,K\}$ to be used for communication at each inter-plane ISL (i.e., for each satellite pair). These resources are considered to be either orthogonal sub-carriers in OFDMA or orthogonal codes in CDMA. In the case of OFDMA, we assume that the $K$ sub-carriers are sufficiently close in the frequency domain and we can remove the frequency dependency in the path loss. Hence, we simply denote the path loss between $u$ and $v$ with any $k\in \set{K}$ as $L(uv)$. Besides, these resources are orthogonal to those used for the intra-plane ISLs. Therefore, no interference between intra- and inter-plane ISLs can occur.

Let $\left\{a_{(u,v,k)}\right\}$ be a set of indicator variables s.t. $a_{(u,v,k)}=1$ if satellite $u$ has an ongoing inter-plane transmission to $v$ with resource $k$ and $a_{(u,v,k)}=0$ otherwise. Only one resource is allocated per satellite pair (i.e., ISL), hence, $\Pr\left[a_{(u,v,k)}+a_{(v,u,k)}>1\right]=0$. That is, each of the established ISLs can only be used for transmission by up to one satellite at the same time.

 A physical interference model with constant noise power, based on the power capture model, is considered~\cite{Cardieri2010}. We denote $\set{I}_{(u,v,k)}=\{a_{(i,j,k)},a_{(j,i,k)}:ij\in \set{V}^{(2)}\setminus uv\}$ to be a permissible interference pattern for a transmission from $u$ to $v$ with resource $k$. The latter, whose formal definition is given at the end of this section after introducing the required notation, serves to define the interference at $v$ for an ongoing transmission from $u$ with resource $k$. That is, the interference at $v$ given $a_{(u,v,k)}=1$, as
\begin{IEEEeqnarray}{rCl}
    I(u,v,k) &=& 
    \!\sum_{a_{(i,j,k)}\in \set{I}_{(u,v,k)}} \!\frac{a_{(i,j,k)} P_t\,G^{d(i,j)}_i G^{d(v,u)}_v}{L(iv)}.\IEEEeqnarraynumspace
    \label{eq:interference}
\end{IEEEeqnarray}
Then, the signal-to-interference-plus-noise ratio (SINR) for an ongoing transmission from $u$ to $v$ with resource $k$ is
\begin{IEEEeqnarray}{rCl}
   \text{SINR}(u,v,k)&=& 
   \frac{\text{EIRPG}}{L(uv)\left(k_\mathrm{B} \tau B + I(u,v,k)\right)}.\IEEEyesnumber
    \label{eq:sinr}
\end{IEEEeqnarray}

Throughout this paper, we assume that the interference can be treated as AWGN~\cite{Cardieri2010}. Naturally, if the instantaneous values of all the elements in $\set{I}_{(u,v,k)}$ are known, the maximum data rate at which $u$ can transmit to $v$ can be selected to ensure $P_\text{out}=0$ as
\begin{equation}
    \max R\Big(u,v,k\mid \set{I}_{(u,v,k)}, P_\text{out}=0\Big)=B \log_2\Big(1+\text{SINR}\left(u,v,k\mid \set{I}_{(u,v,k)}\right)\Big).
    \label{eq:tdr}
\end{equation}
However, selecting and achieving the data rate described by~\eqref{eq:tdr} in practice is infeasible as it requires 1) instantaneous and perfect knowledge of the interference, determined by the activity of all the ISLs sharing a specific resource $k$, and 2) real-time and perfect adaptation of the rate. Instead, we consider a realistic scenario in which the rates are selected at the time the ISLs are established to achieve $P_\text{out}=0$ for the maximum interference that can be created by a permissible interference pattern $\set{I}_{(u,v,k)}$.  
Formally, the rates are selected as
\begin{IEEEeqnarray}{rCl}
    R^\star_\text{SINR}(u,v,k)&=&\max R\left(u,v,k\,\middle|\, \max_{\set{I}_{(u,v,k)}}I(u,v,k), P_\text{out}=0\right)\IEEEnonumber\\
    &=&B \log_2\left(1+\min_{\set{I}_{(u,v,k)}} \text{SINR}\left(u,v,k\right)\right).
    \label{eq:sinr_rate}
\end{IEEEeqnarray}

At this point, it is convenient to introduce $R_\text{min}$, defined as the minimum acceptable rate to establish an ISL prior to resource allocation. That is, an ISL between $u$ and $v$ can only be established if $R^\star_\text{SNR}(uv)>R_\text{min}$. The latter represents, for example, the minimum rate required to complete the necessary handshakes between the satellite pairs. Building on this, we calculate the minimum SNR to establish an ISL for all satellites as $\gamma=2^{R_\text{min}/B}-1$. Hence, an ISL between $u$ and $v$ can only be established if 
\begin{equation}
   \text{EIRPG}\geq L(uv) k_\text{B}\tau B \left(2^{R_\text{min}/B}-1\right)
    \label{eq:eirp}
\end{equation}
This allows us to treat the maximum path loss (MPL) to achieve the desired $R^\star_\text{SNR}(uv)=R_\text{min}$ as a design parameter. The latter is given as
\begin{equation}
   \text{MPL}(uv)= \frac{\text{EIRPG}}{k_\text{B}\tau B \left(2^{R_\text{min}/B}-1\right)}.
\end{equation}

\subsection{Problem formulation}
We consider the problem of maximizing the sum of rates in the inter-plane ISLs of a satellite constellation. In the following, we provide essential definitions that will be used to define our problem in a graph setting.

The neighborhood of vertex $u\in \set{V}$ in graph $\set{G}$, denoted as $\Gamma_\set{G}(u)$ is the set of vertices in $\set{V}\setminus \{u\}$ that are adjacent to $u$. The degree of vertex $u\in\set{V}$ in graph $\set{G}$ is $\deg_\set{G}(u)=|\Gamma_\set{G}(u)|$. Furthermore, we denote the number of vertices in the neighborhood of $u$ that are in direction $d$ from $u$ as $\deg^d_\set{G}(u)=|\left\{v\in\Gamma_\set{G}(u):d(v,u)=d\right\}|$ for $d\in\{-,+\}$. The maximal and minimal degrees of a vertex in graph $\set{G}$ is denoted as $\Delta(\set{G})=\max_u \deg_\set{G}(u)$ and $\delta(\set{G})=\min_u \deg_\set{G}(u)$, respectively. The size of a graph $\set{G}=(\set{V},\set{E})$ is given by the number of elements in its edge set $|\set{E}|$. The weight of a set of edges $\set{E}$ is defined as $w(\set{E})=\sum_{e\in{\set{E}}}w(e)$. A matching $\set{M}$ is a subset of $\set{E}$ that represents an association between the vertices in $\set{V}$. A maximum weighted matching is the subset $\set{M}^\star\subseteq \set{E}$ s.t. $w(\set{M})$ is maximal. That is, $\nexists \set{M}':w(\set{M}')>w(\set{M}^\star)$.

 Building on the latter, our problem is formulated as a many-to-one maximum weighted matching problem with \emph{externalities} in a bipartite graph  $\set{G}_\set{A}=(\set{M}\cup\set{K},\set{A})$, where each $uv\in\set{V}_\set{M}\subseteq\set{E}$ is a feasible satellite pair in the constellation $\set{G}=(\set{V},\set{E})$, where $u,v\in\set{V}$. Therefore, each edge $uvk\in\set{A}$ indicates that resource $k$ has been allocated to the communication between satellites $u$ and $v$. The same resource $k$ can be allocated to (i.e., shared by) several satellite pairs $uv\in\set{V}_\set{M}$ if $|\set{V}_\set{M}|>K$. 
 Note that $\set{A}$ is not necessarily a perfect matching, since some satellite pairs may not be allocated resources for communication. 

As defined by~\eqref{eq:snr_rate} and~\eqref{eq:sinr_rate}, we consider a realistic worst case scenario where neither the centralized entity nor the satellites have instantaneous knowledge of $\mathcal{I}_{(u,v,k)}$. Therefore, the rates at each ISL are selected to ensure $P_\text{out}=0$ for any permissible interference pattern and at all times. The latter is defined in the following.
\begin{definition}[Permissible interference pattern]
We define $\mathcal{I}_{(u,v,k)}$ to be a permissible interference pattern for an ongoing transmission from $u$ to $v$ with resource $k$ (i.e., given $a_{(u,v,k)}=1$) if and only if up to one satellite for each satellite pair that has been allocated the same resource $k$, $ijk\in \set{A}\setminus \{uvk\}$ is transmitting at the same time as $u$. Formally,
\[
\mathcal{I}_{(u,v,k)}=\left\{a_{(i,j,k)},a_{(j,i,k)}:ijk\in \set{A}\setminus\{uvk\} \textsc{ and }  a_{(i,j,k)}+a_{(j,i,k)}\in\{0,1\}\right\}.
\]
\end{definition}
Building on this, we define the edge weights of $\set{G}_\set{A}$ as the sum of the rates selected for communication at each satellite, namely \begin{equation}
w(uvk)=R^\star_\text{SINR}(u,v,k)+R^\star_\text{SINR}(v,u,k).
\end{equation}
Note that the weights of each of the edges in $\set{A}$ are affected by the other edges in $\set{A}$. In other words, having a resource $k$ allocated to a satellite pair $uv$ implies that the communication between $u$ and $v$ creates interference to every $ijk\in\set{A}\setminus \{uvk\}$. Therefore, the externalities are represented by the interference created by the sharing of resources, which changes every time a new edge is added to $\set{A}$. Hence, the rates for communication $\{R^\star_\text{SINR}(u,v,k)\}$ can only be selected after $\set{A}$ has been populated.

The optimal solution to our resource allocation problem is the maximum matching $\set{A^\star}$. The latter can only be achieved by adapting the resources allocated to the satellite pairs to the changes in the permissible interference pattern throughout the matching and, in turn, in the rates selected at each edge in $\set{A^\star}$. Clearly, an algorithm capable of finding $\set{A^\star}$ would be of a tremendous complexity. Tn the following, we present our framework to find a near-optimal solution to the ISL establishment problem with a relatively low asymptotic complexity.

\section{Proposed framework}
\label{sec:matching}
\ilm{Our framework solves the ISL establishment problem by first solving the inter-plane satellite matching from a set of feasible edges $\set{E}$. Then, if the ISLs are affected by interference, orthogonal resource allocation takes place. That is, selecting all the satellite pairs first and, if needed, allocate the wireless resources afterwards}.

\subsection{Satellite matching}
 The satellite matching with $Q$ transceivers is a many-to-many maximum weighted matching problem. The goal is to find a subset of edges $\set{M}\subseteq \set{E}$ to maximize the sum of SNR rates in the constellation. For this, we define the weighted subgraph $\set{G}_\set{M}=\left(\set{V}_\set{M}\subseteq\set{V}, \set{M}\right)\subseteq \set{G}$. The latter has to fulfill the following conditions.
 \begin{itemize}
     \item The maximal degree of any vertex in $\set{G}_\set{M}$ is less than or equal to the number of transceivers. That is, $\Delta(\set{G}_\set{M})\leq Q$. Note that we have a one-to-one matching for $Q=1$. 
     \item A satellite $v$ is in the neighborhood of $u$ if and only if there is no other satellite adjacent to $u$ in direction $d(u,v)$. That is, $v\in \Gamma_{\set{G}_\set{M}}(u)$ if and only if $d(u,v)\neq d(u,i)$ for all $i\neq v\in \Gamma_{\set{G}_\set{M}}(u)$
 \end{itemize}
 
 Therefore, it is now convenient to introduce tho following definition.
 \begin{definition}[Permissible neighborhood] We define $\Gamma_{\set{G}_\set{M}}(u)$ to be a permissible neighborhood for vertex $u$ in graph $\set{G}_\set{M}$ if and only if $\deg_{\set{G}_\set{M}}(u)\leq Q$ and $\deg^d_{\set{G}_\set{M}}(u)\in\{0, 1\}$ for all $d\in\{-,+\}$.  \label{def:neighs}\end{definition}
 Naturally, the satellite pairs in $\set{G}_\set{A}$ defined in the previous section must also be selected from a permissible neighborhood.

 As described in Section~\ref{sec:systemmodel}, an interference-free environment is considered during the satellite matching and, since the wireless channels are symmetric, the weights of the edges are defined as $w(uv)=2R^\star_\text{SNR}(uv)$. Consequently, the satellite matching problem is defined as
\begin{IEEEeqnarray*}{lCll}
    \text{maximize}&\quad& w(\set{M})=2\sum_{uv\in \set{M}}R^\star_{\text{SNR}}(uv)\\
    \text{subject to}&&  \deg^d_{\set{G}_\set{M}}(u)\in\{0,1\},& \forall\, u\in \set{V}, d\in\{-,+\}\\
    && \Delta(\set{G}_\set{M}) \leq Q.\IEEEyesnumber
    \label{eq:isl_matching_prob}
\end{IEEEeqnarray*}

If no interference can occur between the selected satellite pairs, solving~\eqref{eq:isl_matching_prob} solves the ISL establishment problem, as any number of resources $K\in\mathbb{N}^+$ allows to directly use the rates $R^\star_\text{SNR}(uv)$ for communication with $P_\text{out}=0$. Otherwise, resource must take place as described in the following to maximize the sum of rates in the constellation.

\subsection{Resource allocation}
Resource allocation is still the many-to-one maximum weighted matching problem with externalities in a bipartite graph, described above. However, its complexity is greatly reduced by having a fixed set of satellite pairs $\set{M}$.

Building on this, we formally define the resource allocation problem as
\begin{IEEEeqnarray*}{lCll}
    \text{maximize}&\quad& \IEEEeqnarraymulticol{2}{l}{w(\set{A})=\sum_{uvk\in\set{A}}
    R^\star_\text{SINR}(u,v,k)+R^\star_\text{SINR}(v,u,k)}\\
     \text{subject to}&&  
     \deg_{\set{G}_\set{A}}(uv)=1, \qquad& \forall\, uv\in\set{M}.
    \IEEEyesnumber
    \label{eq:wa}
\end{IEEEeqnarray*}
Note that the satellite pairs in $\set{M}$  fulfill the conditions stated on Definition~\ref{def:neighs}.

 Next, let $M=|\set{M}|$ and $A=|\set{A}|$ be the size of the satellite matching and resource allocation graphs, respectively. Observe that the constraint in~\eqref{eq:wa} ensures that $A=M$ at the end of the resource allocation. Thus, a perfect matching is guaranteed.

The satellite matching and resource allocation phases of our framework are illustrated in Fig.~\ref{fig:examples} on page~\pageref{fig:examples} for  $Q=\{1,2\}$ and $K=3$ in a region of a Walker star constellation with $P=4$.


\section{Algorithms}
\label{sec:matching_algs}

Our proposed algorithms are centralized. Therefore, an entity with the orbital parameters of all the satellites is in charge of computing a new matching every $T$~seconds. This entity can be deployed either at the ground segment (e.g., a ground station) or at the space segment (e.g., a satellite with sufficient computational power). Note that, for the centralized matching to be feasible, the matching must be solved ahead of time and communicated to the whole constellation. Hence, the processing time and the communication overhead to deliver the result to all the satellites must be taken into account, so the satellites can complete the process in a timely manner. Because of this, the status of the buffer and, hence, the activation pattern of the satellites cannot be taken into account in real time. A solution to incorporate the activation pattern of the satellites is having them--or even the centralized entity--to make projections and infer the activity of the satellites ahead of time. However, these approaches are out of the scope of this paper. Instead, we assume a general case in which the centralized entity has no prior knowledge on the activation pattern of the satellites.

We propose the use of greedy algorithms to solve, first, the satellite matching and, then, the resource allocation. Greedy algorithms make the best decision at each iteration and usually present a relatively low complexity. On the downside, greedy algorithms are not guaranteed to converge to the optimal solution in matching problems. However, the worst-case performance of greedy algorithms in one-to-one weighted matching problems is well characterized~\cite{Avis1983}. 

\begin{theorem}[Worst-case result for greedy algorithms in a one-to-one maximum weighted matching] Let $\set{M}^\star$ be the optimal (i.e., maximum weighted) matching on an undirected graph $\set{G}=(\set{V},\set{E})$ and $\set{M}$ be the matching with a greedy algorithm that selects the maximum weighted edge at each iteration. The worst-case performance of such greedy algorithm w.r.t. the optimal matching is given as
\begin{equation}
        w(\set{M})\geq \frac{w(\set{M}^\star)}{2}
    \end{equation} 
    \end{theorem}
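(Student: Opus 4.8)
The plan is to use a charging (or exchange) argument: I would compare the greedy matching $\set{M}$ against the optimal matching $\set{M}^\star$ edge by edge, showing that every edge of $\set{M}^\star$ can be ``paid for'' by at most twice the weight of some incident edge of $\set{M}$. First I would observe that $\set{M}$ is a \emph{maximal} matching in the sense that every edge $e\in\set{M}^\star$ must share at least one endpoint with some edge in $\set{M}$; otherwise greedy could have added $e$ (or a heavier edge at that endpoint) and would not have terminated. This is the structural fact that makes the charging work.

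Next I would set up the charging map. For each edge $e=uv\in\set{M}^\star$, consider the iteration of the greedy algorithm at which the first of $u,v$ became matched, say by an edge $g(e)\in\set{M}$ incident to $e$. At that iteration $e$ was still available (neither endpoint was yet matched), so by the greedy selection rule $w(g(e))\ge w(e)$. I would then bound, for a fixed greedy edge $g=xy\in\set{M}$, the number of optimal edges that can be charged to it: since $\set{M}^\star$ is a matching, at most one optimal edge meets $x$ and at most one meets $y$, so $|g^{-1}(\cdot)|\le 2$. Summing the inequality $w(e)\le w(g(e))$ over all $e\in\set{M}^\star$ and regrouping by the value of $g(e)$ gives
\begin{equation}
w(\set{M}^\star)=\sum_{e\in\set{M}^\star}w(e)\le\sum_{g\in\set{M}}\,\bigl|g^{-1}(g)\bigr|\,w(g)\le 2\sum_{g\in\set{M}}w(g)=2\,w(\set{M}),
\end{equation}
which rearranges to the claimed bound $w(\set{M})\ge w(\set{M}^\star)/2$.

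The main obstacle, and the step that needs the most care, is making the charging map well defined and its preimages genuinely of size at most two. One has to be careful that when an endpoint of $e$ is first matched, the edge $g(e)$ that matches it really does satisfy $w(g(e))\ge w(e)$ — this relies on $e$ being a candidate at that moment, which in turn uses that neither endpoint of $e$ was matched earlier. I would also handle the degenerate possibility that both endpoints of $e$ are matched simultaneously (they cannot be, since greedy adds one edge per iteration and that edge covers two specific vertices) and the case $e\in\set{M}$ (then $g(e)=e$ trivially and it is charged only to itself). Finally I would note that the bound is tight by exhibiting the standard three-vertex path $P_3$ with edge weights $1,1+\varepsilon$ placed so greedy picks the middle... wait — in $P_3$ greedy picks the heaviest edge; the classical tight example is the path on three edges $a\!-\!b\!-\!c\!-\!d$ with weights $1,\,2,\,1$ — no, the cleanest is the path $u\!-\!v\!-\!w$ with weights $w(uv)=w(vw)=1$ is not tight; the tight family is the ``augmenting path'' $P_3$: vertices $1,2,3,4$, edges $12,23,34$ with weights $1,1+\varepsilon,1$, where greedy takes $23$ of weight $1+\varepsilon$ while the optimum takes $\{12,34\}$ of weight $2$, giving ratio approaching $1/2$. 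I would include this example briefly to show the constant $2$ cannot be improved.
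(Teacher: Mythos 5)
Your proof is correct and is essentially the same argument as the paper's: both charge each edge of $\set{M}^\star$ to the greedy edge whose selection removed it from consideration, note that this greedy edge has at least as large a weight, and observe that each greedy edge absorbs at most two such charges. Your write-up is somewhat more careful about well-definedness of the charging map, and adds a tightness example the paper omits, but the underlying reasoning is identical.
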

    \begin{proof} Let  $\set{L}=(\ell_1,\ell_2,\dotsc, \ell_{|\set{E}|})$ be the ordered list s.t. $w(\ell_i)\geq w(\ell_{i+1})$ for all $i\in\{1,2,\dotsc,|\set{E}|-1\}$. At each iteration of the greedy algorithm, the first edge in the list, namely $\ell_1$, is added to $\set{M}$ and all the incident edges are deleted from $\set{L}$. Therefore, at most, two edges in the maximum matching, namely $e_1,e_2\in\set{M}^\star$, are removed from $\set{L}$ at each iteration. Since $w(\ell_1)\geq\max\{w(e_1),w(e_2)\}$, we have that $2w(\ell_1)\geq w(e_1)+w(e_2)$. This scenario can occur $|\set{M}^\star|/2$ times, but the greedy algorithm will continue adding edges until $\set{L}=\emptyset$. Since $w(\set{M})=\sum_{e\in\set{M}}w(e)$, we have that $2w(\set{M})\geq w(\set{M}^\star)$. This concludes the proof.
    \end{proof}

In the following, we describe and derive the complexity of the centralized algorithms to solve the satellite matching and resource allocation. These can be used for any constellation geometry, value of $P\in\mathbb{N}^+$, and $Q\in\{1,2\}$ derive their complexity.

\subsection{Satellite matching}
\textbf{Greedy Independent Experiments satellite Matching (GIEM):} This is a greedy centralized matching algorithm, where the matching $\set{M}$ is solved every time from $\set{M}=\emptyset$. 

An efficient implementation is to
create an ordered queue $\set{L}=(\ell_1, \ell_2,\dotsc)$ with elements $\ell_i\in \set{E}$ s.t. $w(\ell_i)=R^\star_\text{SNR}(\ell_i)$, $w(\ell_i)\geq w(\ell_{i+1})$, and $w(\ell_i) \geq R_\text{min}$ for all $i$. Then, at each iteration, the first element $\ell_1=uv$ in the queue $\mathcal{L}$ is first added to $\set{M}$ if $\deg_{\set{G}_\set{M}}^{d(u,v)}(u)+\deg_{\set{G}_\set{M}}^{d(v,u)(v)}=0$, $\deg_{\set{G}_\set{M}}(u)<Q$, and $\deg_{\set{G}_\set{M}}(v)$. Then, $\ell_1$ is removed from $\mathcal{L}$ This process is repeated until the queue $\mathcal{L}$ is empty. Algorithm~\ref{alg:independent_experiments} summarizes the GIEM algorithm. 

\begin{algorithm} [t]
	\centering
	\caption{Algorithm for greedy independent experiments satellite matching (GIEM).}
	\begin{algorithmic}[1] 
	\renewcommand{\algorithmicrequire}{\textbf{Input:}}
		\renewcommand{\algorithmicensure}{\textbf{Output:}}
		\REQUIRE Set of feasible weighted edges $\set{E}$
		\REQUIRE Number of transceivers $Q$
		\STATE $\set{G}_\set{M}=(\set{V}_\set{M}=\emptyset,\set{M}=\emptyset)$
		\STATE Create $\mathcal{L}=(\ell_1,\ell_2,\dotsc)$ with $\{\ell_i\}=\set{E}$ s.t. $w(\ell_i)\geq w(\ell_{i+1})$, and $w(\ell_i) \geq R_\text{min}$  for all $i$.
		\WHILE { $\mathcal{L} \neq \emptyset$}
		\STATE $uv\leftarrow\ell_1$
		\IF {$\deg_{\set{G}_\set{M}}^{d(u,v)}(u)+\deg_{\set{G}_\set{M}}^{d(v,u)}(v)==0$ \AND $\deg_{\set{G}_\set{M}}(u)<Q$ \AND $\deg_{\set{G}_\set{M}}(v)<Q$ }
		\STATE $\set{M}\leftarrow \set{M}\cup \{uv\}$
		\STATE  {$\set{V}_\set{M}\leftarrow\set{V}_\set{M}\cup\{u,v\}$}
		\ENDIF
		\STATE Delete $\ell_1$
		\ENDWHILE
	\ENSURE \ilm{$\set{M}$}
	\end{algorithmic}  \label{alg:independent_experiments}
\end{algorithm}

To calculate the complexity of the GIEM algorithm, we first define \[\set{E}'=\{uv\in \set{V}^{(2)}:p(u)\neq p(v)\}\supset\set{E}.\] Next, we calculate an upper bound for the number of feasible edges in $\set{L}$.  
\begin{equation}
   |\set{L}|=|\set{E}|\ll|\set{E}'|=\frac{1}{2}\left[N^2-\sum_{p=1}^P N_p^2\right].
\end{equation}
Note that, for the case with $N_p=N/P$, we have $|\set{E}'|=PN_p^2(P-1)$. The insertion of the elements in $\mathcal{L}$ has a  cost $\mathcal{O}\left(|\set{L}|\right)\leq \mathcal{O}\left(|\set{E}'|\right)$. Then, the cost of sorting the list $\mathcal{L}$ is  $\mathcal{O}\left(|\set{L}|\log_2\left(|\set{L}|\right)\right)$ with Merge Sort.

Then, at each iteration, two comparisons, one deletion (the first element in $\mathcal{L}$), up to one insertion in $\set{M}$, and up to two insertions in $\set{V}_\set{M}$ are performed. All of these operations have a cost $\mathcal{O}(1)$ and the process is repeated $|\set{L}|$ times. Therefore, the cost of the operations performed after the list $\set{L}$ has been sorted is $\mathcal{O}\left(|\set{L}|\right)\leq\mathcal{O}(|\set{E}'|)=\mathcal{O}\left(PN_p^2(P-1)\right)$, which for $P=2$ is one order of magnitude lower than the complexity of the well-known Hungarian algorithm, namely $\mathcal{O}\left(N_p^3\right)$. However, the overall cost of the GIEM algorithm is determined by the sorting of the list $\mathcal{O}\left(|\set{L}|\log_2\left(|\set{L}|\right)\right)$. 

It is easy to see that the solution provided by the GIEM algorithm is unique if a correct and consistent sorting algorithm is used, creating the exact same list $\set{L}$ from the exact same input $\set{E}$. Finally, the GIEM algorithm is guaranteed to terminate after $|\set{L}|+1$ executions of line $3$. This is because exactly one element from $\set{L}$ is removed at each iteration independently on whether it is added to $\set{M}$ or not and $|\set{L}|\leq|\set{E}|\leq\infty$.

 Furthermore, avoiding \emph{unstable pairs} in the matching $\set{M}$ is a common feature among greedy algorithms. An unstable a pair in a matching $\in\set{M}$ is a vertex pair $u$ and $v$ s.t. $\{uj,iv\}\in\set{M}$ even though the edge $uv$ is preferred by both $u$ and $v$. To illustrate that the GIEM algorithm avoids unstable pairs, consider satellites $u$ and $v$, where $uv\in\set{E}$. The GIEM algorithm will always find the weight with the greatest weight at each iteration, namely $\ell_1$. If $w(uv)>w(ui)$ for all $i\neq v$, $uv$ will be added to $\set{M}$ before any other $ui\in\set{E}$ if doing so maintains permissible neighborhoods for both $u$ and $v$ in $\set{G}_\set{M}$. Thus, no unstable pairs can be created.

Greedy Markovian satellite Matching (GMM): This \ilm{is an extension} of the GIEM matching where the satellite pairs are maintained for as long as possible. Therefore, the satellites always prefer to be paired as in the previous matching and rate maximization is a secondary objective. This considerably reduces the number of changes in the satellite pairs and, hence, the number of handshakes that must be performed at each realization of the matching. Let $\set{M}(n)$ be the $n$th realization of the satellite matching. The starting point for the $n$th realization of the GMM is $\set{M}(n)=\emptyset$ and its previous realization $\set{M}(n-1)$. Then, the algorithm searches within $\set{M}(n-1)$ to identify the satellite pairs that are still feasible. These are the edges in the set $\left\{e\in \set{M}(n-1)\cap \set{E}\right\}$ and includes them, one by one, in $\set{M}(n)$ if the quota of the antennas has not been reached. This can happen if there are changes in the relative direction of the satellites. After \ilm{each pair is added to $\set{M}(n)$, all the pairs that are no longer feasible are removed from $\set{E}$ (see line 6, from Algorithm~\ref{alg:markovian}).  Then, the resulting $\set{E}$ is used as an input to the GIEM algorithm, whose output is $\set{M}_\text{GIEM}$. From there, the output of the GMM algorithm is $\set{M}(n)\cup\set{M}_\text{GIEM}$.}  Algorithm~\ref{alg:markovian} summarizes the GMM algorithm.

\begin{algorithm} [t]
	\centering
	\caption{Algorithm for greedy Markovian satellite matching (GMM).}
	\begin{algorithmic}[1] 
	\renewcommand{\algorithmicrequire}{\textbf{Input:}}
		\renewcommand{\algorithmicensure}{\textbf{Output:}}
		\REQUIRE Set of weighted edges $\set{E}$
		\REQUIRE Number of transceivers $Q$
		\REQUIRE Previous matching $\set{M}(n-1)$
			\STATE $\set{G}_\set{M}=(\set{V}_\set{M}=\emptyset,\set{M}(n)=\emptyset)$
		\FORALL{$uv\in \set{M}(n-1)\cap \set{E}$}
		\IF {$\deg_{\set{G}_\set{M}}^{d(u,v)}(u)+\deg_{\set{G}_\set{M}}^{d(v,u)}(v)==0$ \AND $\deg_{\set{G}_\set{M}}(u)<Q$ \AND $\deg_{\set{G}_\set{M}}(v)<Q$ }
		\STATE $\set{M}(n)\leftarrow \set{M}(n)\cup \{uv\}$
		\STATE {$\set{V}_\set{M}\leftarrow\set{V}_\set{M}\cup\{u,v\}$}
		\STATE \ilm{$\set{E}\leftarrow\set{E}\setminus \big\{\left\{ui\in\set{E}:d(u,i) = d(u,v)\right\}\cup \left\{vi\in\set{E}:d(v,i) = d(v,u)\right\}\big\}$}
		\ENDIF
		\ENDFOR
	\STATE \ilm{Perform the GIEM algorithm (Algorithm~\ref{alg:independent_experiments}) with the resulting $\set{E}$ as an input; get $\set{M}_\text{GIEM}$ as an output} 
		\ENSURE \ilm{$\set{M}(n)\leftarrow\set{M}(n)\cup\set{M}_\text{GIEM}$}
	\end{algorithmic}  \label{alg:markovian}
\end{algorithm}
Note that, for the GMM we have an even smaller $|\set{L}|$ than for the GIEM because the former initiates the search within the previous matching $\set{M}(n-1)$. Therefore, the majority of the reduction in the execution time of the GMM algorithm when compared to the GIEM algorithm, which will be observed in Section~\ref{sec:results}, is due to sorting a smaller list. The properties described above for the GIEM algorithm also hold for the GMM algorithm.

\subsection{Greedy resource allocation (GRA)}
 Once the satellite pairs have been formed, orthogonal wireless resources are assigned to maximize the sum of rates as a function of the SINR (i.e., considering the interference). Hence, the weights at this phase are defined as $w(uvk)=R^\star_\text{SINR}(u,v,k)+R^\star_\text{SINR}(v,u,k)$. Let $\set{A}=\emptyset$ be the resource allocation at the beginning of a realization, that is, immediately after the satellite matching $\set{M}$ (or $\set{M}(n)$ with the GMM algorithm) has been populated.
 At the beginning of the algorithm, the ordered list $\set{L}_\set{M}$ is created with all the elements $uv\in\set{M}$.
Then, at each iteration, the centralized entity selects $uv=\ell_1$ and allocates the resource $k^\star$ that leads to the maximum sum of rates (i.e., the greedy choice made globally). For this, the centralized entity calculates $w(\set{A}\cup \{uvk\})$ for all $k\in\set{K}$ and assigns
\begin{equation}
    k^\star= \argmax_k w(\set{A}\cup \{uvk\})
    \label{eq:kstar}
\end{equation} 
to the satellite pair $uv$, where $w(A)$ is given as in~\eqref{eq:wa}. Recall that $R^\star_\text{SINR}(u,v,k)$ is the maximum rate that can be selected to transmit from $u$ to $v$ with resource $k$ with zero outage probability. The latter is calculated by evaluating the SINR with all possible combinations of $\set{I}_{(u,v,k)}$ via exhaustive search. This process is summarized in Algorithm~\ref{alg:res_alloc}.

\begin{algorithm} [t]
	\centering
	\caption{Centralized algorithm for greedy resource allocation (GRA).}
	\begin{algorithmic}[1] 
	\renewcommand{\algorithmicrequire}{\textbf{Input:}}
		\renewcommand{\algorithmicensure}{\textbf{Output:}}
		\REQUIRE Satellite matching $\set{M}$	
	\REQUIRE Set of resources $\set{K}$
	\STATE $\set{L}_\set{M}=(\ell_1,\ell_2,\dotsc, \ell_M)$ with $\{\ell_i\}=\set{M}$ s.t. $w(\ell_i)\geq w(\ell_{i+1})$ for all $i$.
		\STATE $\set{A}=\emptyset$, 
		\WHILE{$\set{L}_\set{M}\neq\emptyset$}
		\STATE $uv\leftarrow \ell_1$
		\STATE Allocate resource $k^\star$ to $uv$ according to~\eqref{eq:kstar}
		\STATE $\set{A}\leftarrow \set{A}\cup \{uvk^\star\}$
			\STATE Delete $\ell_1$
		\ENDWHILE
		\ENSURE $\set{A}$
\end{algorithmic}
\label{alg:res_alloc}
\end{algorithm}

To obtain the complexity of our GRA algorithm, we have that at most $4K\,(m-1)$ operations are performed at the $m$th iteration of the algorithm to calculate $w(\set{A}\cup \{uvk\}))$ for all the possible values of $k$. This is because there are four possible combinations of values for $\{a_{(i,j,k)}, a_{(j,i,k)}, a_{(u,v,k)}, a_{(v,u,k)}\}$ that can minimize the SINR for each of the links $\{ijk\}\in\set{A}$ at the $m$th iteration and $uvk$. Next, $K$ additions and comparisons are performed to identify $k^\star$. Since a total of $M$ iterations are required, the complexity of the GRA algorithm is $\mathcal{O}\left(KM^2\right)$.

Clearly, the solution with our GRA algorithm is unique if a correct and consistent sorting algorithm is used to create $\set{L}_\set{M}$. Furthermore, the GRA algorithm is guaranteed to provide the optimal solution that can be obtained with any greedy algorithm because exhaustive search is used to select the $k^\star$ that maximizes the sum of rates at each iteration; hence, $k^\star$ is the best global decision. Finally, the GRA algorithm is guaranteed to terminate after $M+1$ executions of line $3$ of Algorithm~\ref{alg:res_alloc}, as exactly one element in $\set{L}_\set{M}$ is deleted at each iteration and $|\set{L}_\set{M}|=M$ before the first execution of the loop.
\section{Experiment design and parameter selection}
 \label{sec:isl_conn}
The algorithms and the analysis presented in the previous section are generic and valid for any graph that represents a communication network topology with analogous constraints as a satellite constellation. In this section, we particularize to representative satellite constellation topologies and investigate the connectivity characteristics of the inter-plane ISLs to select appropriate simulation parameters. 

In particular, we are set to select the minimum value for the EIRPG needed to ensure that $\delta(\set{G})\geq 1$ and, hence, that $\set{G}=(\set{V},\set{E})$ is a connected graph at all times. That is, to ensure that all the satellites have at least one possible inter-plane neighbor with which they can communicate at a rate higher than $R_\text{min}$ at all times. Hereafter, we refer to this characteristic as \emph{full inter-plane ISL connectivity}. Furthermore, we define the performance indicators and briefly describe the satellite matching algorithm used as a benchmark.
 
As a starting point, we define $L^\star(P,N_p,f)$ as the maximum FSPL to the nearest inter-plane neighbor in a Walker star constellation. Next, we set $\text{MPL}=L^\star(P,N_p,f)$ and substitute $L(uv)$ with the latter in~\eqref{eq:eirp} to calculate the minimum EIRPG that ensures full inter-plane ISL connectivity.

 Recall that the polar angle of a satellite $u$ is denoted as $\theta_{u}$. From there, let $v^\star\in \set{V}_q$ be the closest satellite in $q=p(v^\star)$ to $u\in \set{V}_p$. We have that
\begin{IEEEeqnarray}{rCl}
    v^\star &=& \argmin_{v\in\set{V}_q} \lVert uv\rVert \iff \theta_{v^\star}=\left[\theta_u-\pi/N_p,\theta_u+\pi/N_p\right]\qquad \forall p,q\in \{1,2,\dotsc, P\}~\IEEEeqnarraynumspace
\end{IEEEeqnarray}
Therefore, $|\theta_u-\theta_{v^\star}|\in\left[0,\pi/N_p\right]$. Besides, we observe that the difference in longitude between adjacent orbital planes is $\pi/P$. Building on this, we find the maximum slant range between two satellites $u$ and $v^\star$ in adjacent orbital planes, namely $p=p(u)$ and  $q=p(v^\star)=(p+1\bmod P)$ in a general Walker star constellation from~\eqref{eq:distance} as
\begin{IEEEeqnarray}{rCl}
l_\text{adj}^\star(P,N_p) 
&=&\max_{\theta_{u},\,\Delta\theta,\,p} \Big( (h_p+\mathrm{R_E})^2 + (h_q+\mathrm{R_E})^2 -2(h_p+\mathrm{R_E}) (h_q+\mathrm{R_E})\IEEEnonumber\\
& & \times\left(\cos\theta_{u}\cos\left(\theta_{u}+|\theta_u-\theta_{v^\star}|\right)
+\cos\left(\epsilon_p-\epsilon_q\right)\sin\theta_{u}\sin\left(\theta_{u}+|\theta_u-\theta_{v^\star}|\right)\right) \Big) ^{1/2}\IEEEyesnumber
 \label{eq:maxdistance}
\end{IEEEeqnarray}
From there, we introduce specific characteristics of our constellation. In particular, we consider that the lowest orbital plane is deployed at a typical altitude of $h_1=600$~km and an orbital separation of $10$~km between orbital planes. Therefore, the altitude of the orbital planes is given as $h_p=h_1+10(p-1)$~km for all $p\in\left\{1,2,\dotsc,P\right\}$. Besides, we assume that the cross-seam inter-plane ISLs are not implemented. Building on this, using simple optimization techniques and due to the symmetry of the slant range (metric) allows us to obtain the closed-form expression of one of the maxima of \eqref{eq:maxdistance}. For example, a maximum is achieved at $\theta_u^\star=\pi/2$, 
 $\Delta \theta^\star=\pi/N_p$, and $p^\star=P-1$, where we have
\begin{IEEEeqnarray}{rCl}
l_\text{adj}^\star(P,N_p) &=&  \Bigg( \left(h_{P-1}+\mathrm{R_E}\right)^2 + (h_P+\mathrm{R_E})^2 \IEEEnonumber\\ 
&&-2(h_{P-1}+\mathrm{R_E}) (h_P+\mathrm{R_E})\left( 
\cos\left(\frac{\pi}{P}\right)\sin\left(\frac{\pi\left(2+N_p\right)}{2N_p}\right)\right) \Bigg) ^{1/2}.
 \label{eq:maxdistance_b}
\end{IEEEeqnarray}
Then, $L^\star\left(P,N_p,f\right)$ is obtained by substituting $\lVert uv \rVert$ with $l_\text{adj}^\star(P,N_p)$ in~\eqref{eq:pathloss}. Thus, we set
\begin{equation}
    \text{EIRPG}=L^\star(P,N_p,f) k_\text{B}\tau B \left(2^{R_\text{min}/B}-1\right).
\end{equation}
Throughout the rest of the paper we investigate the performance of the satellite matching algorithms in two operation regimes: limited and full inter-plane ISL connectivity. To do so, we fix $f=2.4$~GHz, $R_\text{min}=10$~kbps, and $\text{MPL}=L^\star(7,40,f)$~dB, so that full connectivity is only guaranteed for $P\geq 7$ and conduct our analyses for $P\in\{5,6,7,8\}$; these and other relevant parameters are listed in Table~\ref{tab:param}. The communication range to ensure full-inter-plane connectivity with $P=7$ is $l_\text{adj}^\star(7,40)=3527$~km. 
We calculated that an $\text{EIRPG}\geq3.74$~W is needed achieve this communication range with the selected parameters.

\begin{table}[t]
\centering\caption{Parameter settings for performance evaluation.}
\renewcommand{\arraystretch}{1}
\begin{tabular}{@{}lcl@{}}
\toprule
Parameter & Symbol & Setting\\\midrule
Number of orbital planes & $P$ & $\{5,6,7,8\}$\\
Number of satellites per orbital plane & $N_p$ & $40$\\
Altitude of orbital plane $p$ [km] & $h_p$ & $600 + 10(p-1)$\\
Longitude of orbital plane $p$ [rad] & $\epsilon_p$ & $\pi(p-1)/P$\\
Minimum acceptable rate [kbps] & $R_\text{min}$ & $10$\\
EIRP plus receiver antenna gain & EIRPG & $3.74$~W\\
Carrier frequency in the S-band [GHz]& $f$ & $2.4$\\
Carrier bandwidth & $B$ & $20$~MHz\\
Thermal noise~\cite{3GPPTR38.821} & $\tau$ & $354.81$~K\\
Number of inter-plane transceivers & $Q$ & $\{1,2\}$\\
Matching period [s]  & $T$ & $30$\\ 
\bottomrule
\end{tabular}
\label{tab:param}
\end{table}
In our analyses, we consider that the inter-plane transceivers have no self-interference cancellation capabilities. Therefore we set $L(vv)=1$ for all $v$ to enable the calculation of the interference $I(v,v,k)$ as described in~\eqref{eq:interference}. Besides, we consider two scenarios for the impact of the antenna design on interference. The first one is an optimistic scenario where the antennas have sufficiently narrow beams and perfect beam steering capabilities. Therefore, the power towards the intended receiver is always $\text{EIRPG}=3.74$~W and any interference is avoided (i.e., s.t. $I(u,v,k)$ for all $u$, $v$, and $k$). The second is a worst-case scenario, where the interference caused by isotropic antennas is considered. These scenarios correspond to the tight upper and lower bounds on performance for the conservative $R_\text{min}=10$~kbps and the calculated EIRPG. 

To obtain the results presented in the next section, a simulator of the constellation geometry that implements the algorithms described in Section~\ref{sec:matching_algs} was developed in Python 3 specifically for this task. Monte Carlo simulations were run on a PC with Ubuntu 18.04.2 LTS ($64$~bit), an Intel Core i7-7820HQ CPU, $2.9$~GHz, and $16$~GB RAM, with a clock precision of $10^{-7}$~s. In each experiment, the constellation is first rotated according to the period between consecutive matching realizations (i.e., observations) $T=30$~seconds and, then, the satellite matching and resource allocation algorithms are executed. A total of $N_\text{sim}=1000$ matching periods are simulated, which gives a total simulation period of $30000$~seconds. 
As a reference, the latitude of each satellite changes by less than $0.01\pi$ between two consecutive realizations of the matching algorithms. The $n$th realization of the satellite matching and resource allocation algorithms are hereafter denoted as $\set{M}(n)$ and $\set{A}(n)$, respectively. 

The selected performance indicators to assess the performance of the satellite matching algorithms are the empirical mean number of established inter-plane ISLs per satellite 
\begin{equation}
    \hat{\mu}_{M}=\frac{1}{N_\text{sim}\,N} \sum_{n=1}^{N_\text{sim}} |\set{M}(n)|=\frac{1}{2N_\text{sim}\,N} \sum_{n=1}^{N_\text{sim}} \sum_{u\in\set{M}} \deg_{\set{G}_\set{M}}(u)
\end{equation} 
and the sum of rates as a function of the SNR (i.e., neglecting interference)
\begin{equation}
\mu_{R^\star_\text{SNR}(\set{M})} =\frac{1}{N_\text{sim}}\sum_{n=1}^{N_\text{sim}} w(\set{M}(n))
\end{equation}
Then, the selected performance indicator to assess the performance of the resource allocation algorithm is the ratio of normalized mean sum of rates  
\begin{equation}
\hat{\mu}_{R^\star_\text{SINR}(\set{A})} =\frac{1}{\mu_{R^\star_\text{SNR}(\set{M})}\,N_\text{sim}}\sum_{n=1}^{N_\text{sim}}  w(\set{A}(n)).
\end{equation}

We have selected a \emph{Geographical matching (GEO)} algorithm as benchmark for the satellite matching algorithms. The GEO algorithm is inspired by the routing algorithm provided by Ekici \emph{et al.}~\cite{Ekici2001} where the latitude is divided into $N_p$ regions called \emph{logical locations} of width $2\pi/N_p$ as shown in Fig.~\ref{fig:geo_matching}. Then, satellites in neighbouring orbital planes in the same logical location are matched. 
The GEO algorithm can compute the logical locations of all the $N_p P$ satellites and perform the matching in a single pass. 
Hence, the complexity of the GEO algorithm is $\mathcal{O}(P N_p)$.

Finally, the performance of the GRA algorithm is compared to that of round-robin and random allocation. These three algorithms will be applied after satellite matching with the GIEM algorithm. In the round-robin approach, the $K$ orthogonal wireless resources are allocated one by one from the first to the last element of the ordered list $\set{L}_\set{M}$, with complexity $\mathcal{O}(M)$.

\begin{figure}[t]
\centering
\includegraphics{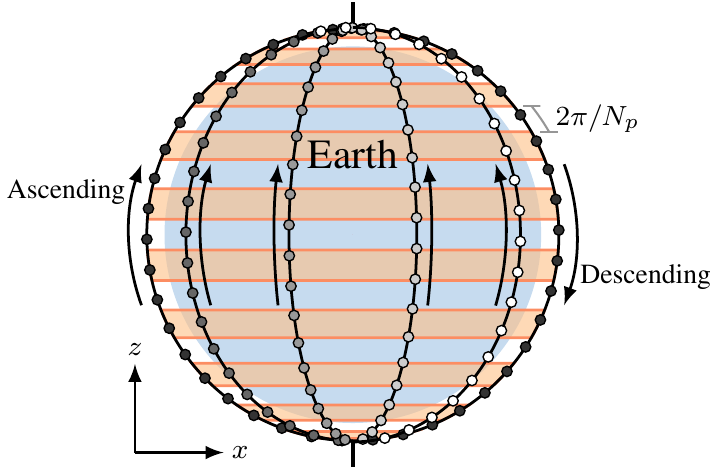}
\caption{Concept behind the GEO matching approach, where logical locations of the satellites are  highlighted~\cite{Ekici2001}.}
\label{fig:geo_matching}
\end{figure}
\section{Results}
\label{sec:results}
This section presents the most relevant results on the performance of the algorithms described in Section~\ref{sec:matching_algs} with the parameters listed in Table~\ref{tab:param}.

As a starting point, we illustrate the characteristics of our problem and the impact of the selected parameters (listed in Table~\ref{tab:param}) in Fig.~\ref{fig:num_matches}, which shows the degree $\deg_{\set{G}_\set{M}}(u)$ for all the satellites in the constellation after a typical realization of the GIEM algorithm for $Q=2$. Note that the degree of satellites $u$ in orbital planes $1$ and $P$ (located in the extremes of each of the subfigures of Fig.~\ref{fig:num_matches}) is lower since cross-seam ISLs are not implemented. It is easy to see that a greedy algorithm will start by establishing the ISL around the crossing points of the orbital planes (i.e., near the  poles), where the shortest slant ranges occur. On the other hand, full inter-plane ISL connectivity is observed for $P\geq7$ because this was the value of $P$ used to calculate the EIRPG as in~\eqref{eq:maxdistance_b}. Nevertheless, full inter-plane ISL connectivity is not necessary to reap some of the benefits of the inter-plane ISLs. For example, it can be seen in Fig.~\ref{fig:num_matches_b} that any satellite is within a few intra- and inter-plane ISL hops from each other. That is, if a direct inter-plane ISL is not available, the packets can be first routed through intra-plane ISLs towards the poles until an inter-plane ISL is available.
\begin{figure}[t]
    \centering
    \subfloat[]{\includegraphics{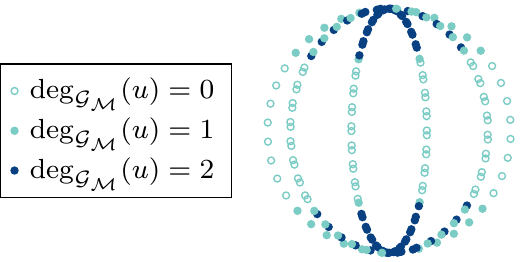}\label{fig:num_matches_a}}\hfil
    \subfloat[]{\includegraphics{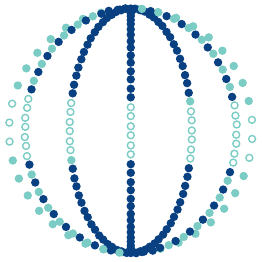}\label{fig:num_matches_b}}\hfil
    \subfloat[]{\includegraphics{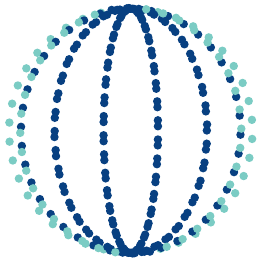}\label{fig:num_matches_c}}\hfil
    \subfloat[]{\includegraphics{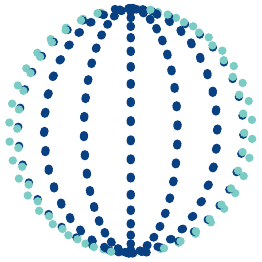}\label{fig:num_matches_d}}
    \caption{Frontal view of the constellation showing the number of matches per satellite after one representative realization of the GIEM algorithm given $\text{MPL}=L^\star(7,40,2.4)$ and $Q=2$ with (a) $P=5$, (b) $P=6$, (c) $P=7$, and (d) $P=8$.}
    \label{fig:num_matches}
\end{figure}

Next, we illustrate the performance of the satellite matching algorithms in terms of $\hat{\mu}_{M}$ in Fig.~\ref{fig:mean_ISLs} and of $\mu_{R^\star_\text{SNR}(\set{M})}$ in Fig.~\ref{fig:mean_sum_rates}. From Fig.~\ref{fig:mean_ISLs} we observe that, in all cases with $Q=2$, a slightly higher $\hat{\mu}_{M}$ is achieved with the GIEM algorithm than with the GEO algorithm, while the GMM algorithm achieves a slightly lower $\hat{\mu}_{M}$ for $P=8$. However, mixed results were obtained with $Q=1$, where the GEO algorithm leads to more established ISLs with $P\in\{6,8\}$. This performance degradation with even numbers of orbital planes was also observed for lower and greater values of $P$. The main reasons for this phenomenon are two-fold: 1) by not implementing the cross-seam ISLs, the number of orbital planes $P$ has a great influence in the connectivity graph $\set{G}$; and 2) the worst-case performance for greedy algorithms is closer to the lower bound of $w(\set{M}^\star)/2$ for even values of $P$ than for odd values. 
Note that, for the GEO algorithm, the maximum number of established inter-plane ISLs $Q(P-1)N_p/2$ is achieved only for $P=8$ even though full inter-plane ISL connectivity is guaranteed for $P\geq7$.

\begin{figure}[t]
    \centering 
    \subfloat[]{\includegraphics{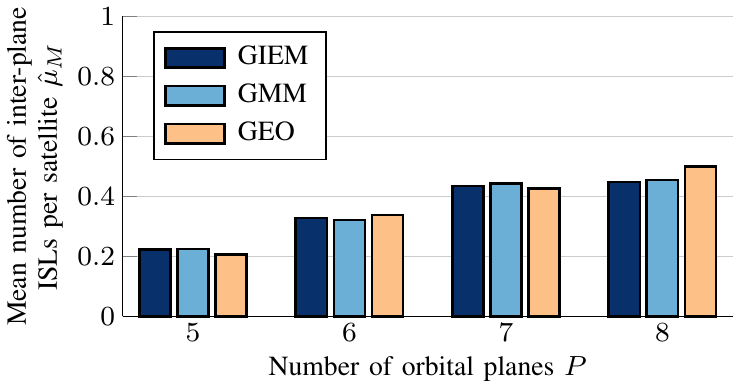}\label{fig:mean_ISLsq1}}\hfil
    \subfloat[]{\includegraphics{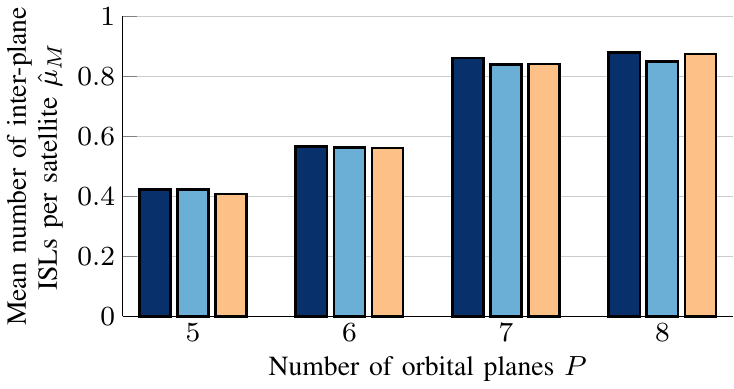}\label{fig:mean_ISLsq2}}
    \caption{Normalized mean number of established inter-plane ISLs per satellite $\hat{\mu}_{M}$ as a function of $P$ with (a) $Q=1$ and (b) $Q=2$.}
    \label{fig:mean_ISLs}
\end{figure}

\begin{figure}[t]
    \centering 
    \subfloat[]{\includegraphics{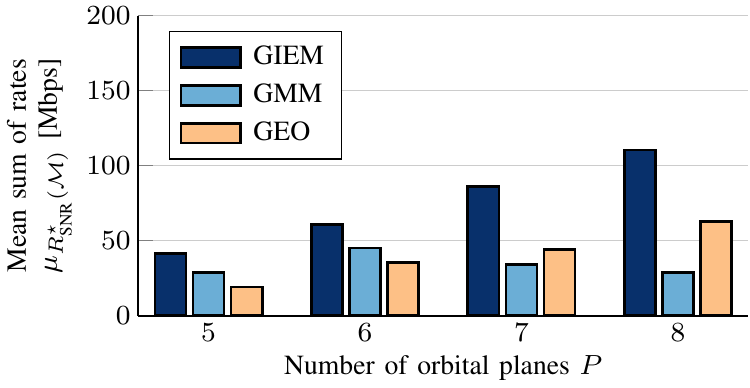}\label{fig:sum_rates_q1}}\hfil
    \subfloat[]{\includegraphics{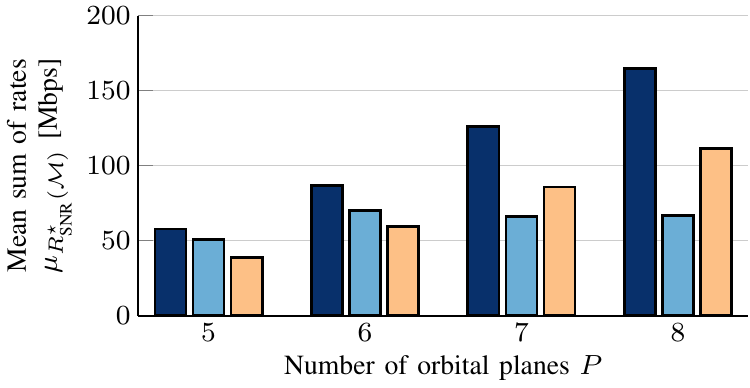}\label{fig:sum_ratesq2}}
    \caption{Mean sum of rates as a function of the SNR $\mu_{R^\star_\text{SNR}(\set{M})}$ with (a) $Q=1$ and (b) $Q=2$.}
    \label{fig:mean_sum_rates}
\end{figure}

Next, Fig.\ref{fig:mean_sum_rates} showcases the massive gains in the sum of rates provided by the GIEM algorithm when compared to the GMM and GEO algorithms. Specifically, even though the three matching algorithms establish a comparable number of ISLs (i.e., satellite pairs), the sum of rates with the GIEM algorithm is up to $115$\% higher than with the GEO algorithm for $P=5$ and up to $285$\% higher than with the GMM algorithm for $P=8$. The reason for this advantage w.r.t. the GEO algorithm is that the ISL connectivity is greatly limited to the polar regions with $P=5$ (see Fig.~\ref{fig:num_matches}), where the greedy algorithms select satellite pairs with greater data rates. On the other hand, the advantage of GIEM w.r.t. the GMM algorithm is because the latter maintains the ISLs for excessively long periods until the rates drop below $R_\text{min}$.

To finalize the performance evaluation of the satellite matching algorithms, Fig.~\ref{fig:CDFs} shows the empirical CDF of (a) the rates at each satellite $R^\star_\text{SNR}(uv)$ and (b) the propagation delay for $P=7$. As it can be seen, there is a great difference between the selected rates of the different satellites in the constellation. For example, with the GIEM algorithm, almost $50$\% of the rates are lower than $20$~kbps, less than $20$\% are above $100$~kbps, and only around $4$\% are higher than $1$~Mbps. Note that, due to its logarithmic scale, Fig.~\ref{fig:CDF_R_SNR} the great differences between the matching algorithms observed in Fig.~\ref{fig:mean_sum_rates} appear to be small. Besides, Fig.~\ref{fig:CDF_propagation_delay} shows that the propagation delay is less than $10$~ms in more than $80$\% of the established ISLs and only slight differences are observed between the matching algorithms. 

\begin{figure*}[t]
    \centering
    \subfloat[]{\includegraphics{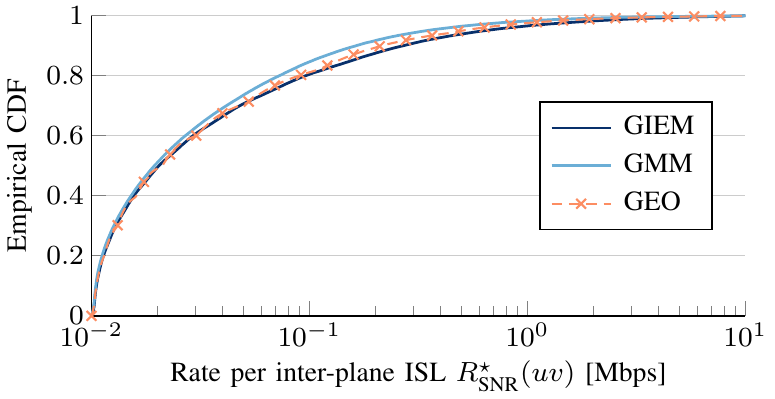}\label{fig:CDF_R_SNR}}\hfil
    \subfloat[]{\includegraphics{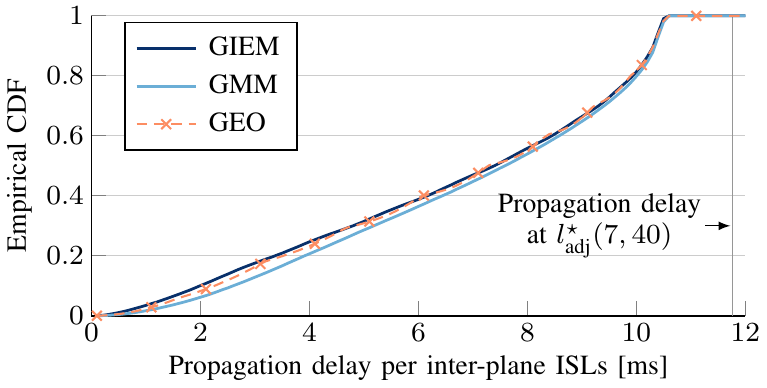}\label{fig:CDF_propagation_delay}}
    \caption{Empirical CDF of the (a) rates $R^\star_\text{SNR}(uv)$ and of the (b) propagation delay per inter-plane ISL for $P=7$. The maximum propagation delay at $l_\text{adj}^\star(7,40)=3527$~km is $11.77$~ms.} 
    \label{fig:CDFs}
\end{figure*}

Now we move on to assess the performance of the GRA algorithm. 
We are interested on finding the value of $K$ that maximizes the sum of rates with OFDMA and CDMA. for this, Fig.~\ref{fig:norm_mu_rates} shows the value of $\hat{\mu}_{R^\star_\text{SINR}(\set{A})}$ obtained with the GRA algorithm, round-robin and random resource allocation. The effective rates after resource allocation with OFDMA with $K$ resources were calculated by substituting $B$ with the sub-carrier bandwidth $B/K$ in~\eqref{eq:sinr} and~\eqref{eq:sinr_rate}. On the other hand, the effective data rates after resource allocation with CDMA with $K$ resources were calculated by dividing the rates calculated in~\eqref{eq:sinr_rate} by the spreading factor $1+\log_2(K)$. 

As it can be seen, the GRA algorithm clearly outperforms the two benchmark approaches. Specifically, $K=3$  and $K=2$ are optimal for the GRA algorithm with OFDMA and CDMA, respectively. With OFDMA, the maximum sum of rates with GRA algorithm is $28$\% greater than with round-robin allocation and $42$\% greater than with random allocation. On the other hand, with CDMA, the maximum sum of rates with GRA algorithm is $60$\% greater than with round-robin allocation and $71$\% greater than with random allocation. Also important to observe is that the sum of rates with OFMDA and its optimal the optimal $K$ is $84$\% greater than with CDMA and its optimal $K=2$.

\begin{figure}[t]
\centering
\includegraphics{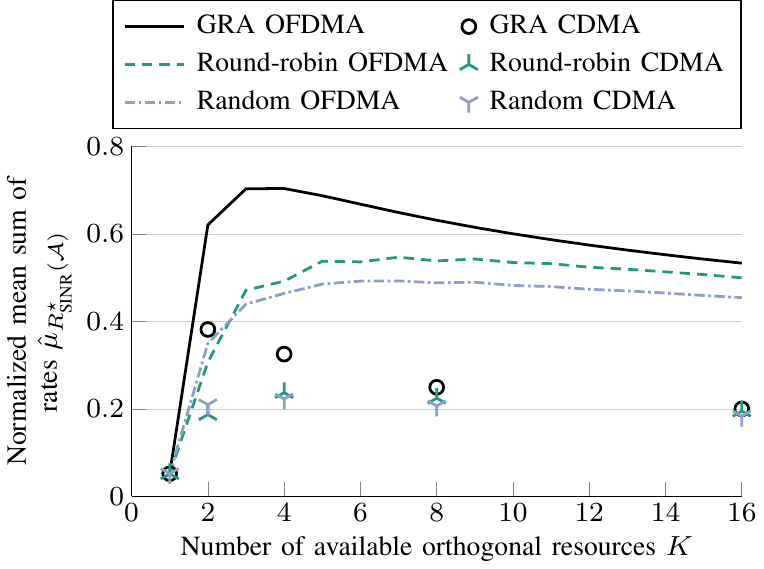}
\caption{Normalized mean sum of rates with OFDMA and CDMA with our GRA algorithm, along with round-robin and random allocation after satellite matching with the GIEM algorithm for $Q=2$ and $P=7$.}
\label{fig:norm_mu_rates}
\end{figure}
 
 
Finally, we compare the complexity of the satellite matching algorithms, along with that of the GIEM algorithm with GRA, which gives the complete solution to our problem in Fig.~\ref{fig:exec_time} for $Q=2$ and $K=P=7$. Clearly, with $\mathcal{O}\left(PN_p\right)$,as derived in Section~\ref{sec:isl_conn}, the GEO algorithm has the lowest complexity, followed by the GMM and the GIEM algorithms with $\mathcal{O}\left(PN_p^2(P-1)\right)$. Recall, that the difference in execution times between these two algorithms is mainly a result of sorting a shorter list with the GMM algorithm.

On the other hand, the complexity of our GRA algorithm is $\mathcal{O}\left(KM^2\right)$. We have observed that $M(n)\approx \mu_{M}\approx N_p(P-1)$ for all $n$, with $Q=2$ and $P=7$. Therefore, we have that the complexity of our GRA algorithm is $\mathcal{O}\left(KN_p^2(P-1)^2\right)$. This is much higher than the complexity of the satellite matching algorithms, performing in the order of $P-1$ times more operations than the GIEM algorithm. Hence, the complexity of establishing the ISLs is mainly determined by the resource allocation, as observed in Fig.~\ref{fig:exec_time}. 

\begin{figure}[t]
    \centering
    \includegraphics{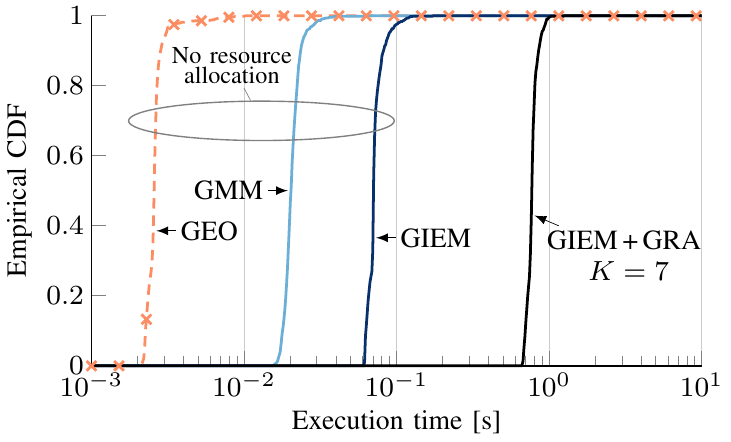}
        \caption{CDF of the execution time for the GIEM, GMM, and GEO, satellite matching algorithms, along with the GIEM algorithm with GRA given $Q=2$ and $K=P=7$.}
    \label{fig:exec_time}
\end{figure}

\section{Conclusions} \label{sec:conclusions} 
In this paper, we presented a framework to maximize the rates in the inter-plane ISLs of dense LEO constellations.
Furthermore, we provided a simple approach to constellation design, where we calculated the minimum transmission power and antenna gains to guarantee that all satellites have at least one potential inter-plane neighbor at all times (full inter-plane ISL connectivity).

Our results show that solving the satellite matching problem from scratch at each realization with our GIEM algorithm leads to a greater sum of rates, when compared to the GMM algorithm, where the previous satellite pairs are maintained for as long as possible, and to the benchmark GEO algorithm. In particular, the difference in performance between the GIEM and GMM algorithm grows with the density of the constellation. However, GMM algorithm reduces the execution time and the communication overhead due to frequent ISL establishment procedures w.r.t. the GIEM algorithm.

Regarding resource allocation, we observed that our algorithm provides massive gains when compared to random and round-robin resource allocation. Specifically, the maximum sum of rates achieved with our GRA algorithm is up to $71$\% greater than with random allocation. On the downside, our GRA algorithm has the greatest complexity of all the satellite matching and resource allocation algorithms considered in our analyses.

Therefore, in real LEO deployments, it is advisable to implement the GIEM and GRA algorithms together in highly dense LEO constellations whenever 1) the ISLs may be affected by interference, 2) the centralized control entity has sufficient computing power, and 3) the communication overhead of frequent ISL establishment is relatively low, as these algorithms will lead to the best performance among the considered algorithms. On the other hand, the GMM algorithm presents an efficient alternative if the communication overhead of establishing the ISLs is large and if no full inter-plane connectivity is guaranteed. 
\section*{Acknowledgment}

This work has been in part supported by the European Research Council (Horizon 2020 ERC Consolidator Grant Nr. 648382 WILLOW). 

\bibliographystyle{IEEEtran}
\bibliography{ISLassignment}

\end{document}